\newtheorem{theorem}{Theorem}
\newtheorem{definition}{Definition}
\newcommand{\be}{\begin{equation}}
\newcommand{\ee}{\end{equation}}
\definecolor{co}{rgb}{0.8,0,0.8}
\definecolor{gr}{gray}{0.5}
\begin{document}

 \conferenceinfo{Symposium on Cryptocurrency Analysis (SOCCA) 2020}{~~~Milan, Italy}

\title{Blockchain Privacy Through Merge Avoidance and Mixing Services: a Hardness and an Impossibility Result}


\author{Jefferson E. Simoes\\
 \affaddr{UNIRIO} \\
\affaddr{Rio de Janeiro, Brazil} \\ 
 \and
Eduardo Ferreira \\
\affaddr{UFRJ } \\
\affaddr{Rio de Janeiro, Brazil} \\ 
\and
Daniel S. Menasche \\
\affaddr{UFRJ} \\
\affaddr{Rio de Janeiro, Brazil} \\
\and
Carlos A. V. Campos \\
\affaddr{UNIRIO} \\
\affaddr{Rio de Janeiro, Brazil} 
}


\maketitle

\begin{abstract}
Cryptocurrencies typically aim at preserving the privacy of their users. Different cryptocurrencies preserve privacy at various levels, some of them requiring  users to rely on strategies to raise the privacy level to their needs. Among those strategies,  we focus on two of them: merge avoidance and mixing services.  Such strategies may be adopted on top of virtually any blockchain-based cryptocurrency. In this paper, we show that whereas optimal merge avoidance leads to an NP-hard optimization problem, incentive-compatible mixing services are subject to a certain class of impossibility results. Together, our results contribute to the body of work on fundamental limits of privacy mechanisms in blockchain-based cryptocurrencies.
\end{abstract}






\section{Introduction}
 
 Privacy is one of the desired properties  by most  users issuing online transactions. Privacy concerns have motivated the development of novel cryptocurrencies, and rankings of cryptocurrencies with respect to privacy.  Monero and Zcash are highly ranked in that respect,  whereas Bitcoin and Ethereum are known to suffer from   weaknesses~\cite{tang2020privacy, monero}.
 

Although  
the history of transactions in blockchain systems is publicly  available to anyone through the blockchain itself, the owners of the addresses involved in the transactions in the blockchain are in principle unknown.  
However, a number of  techniques have been devised and employed to de-anonymize  addresses. They range from basic heuristics, such as clustering the input addresses of transactions~\cite{spagnuolo2014bitiodine}, to some more advanced techniques such as applying machine learning to infer the owners of addresses~\cite{ermilov2017automatic}. 
Merge avoidance and mixing services are two of the solutions to mitigate the effectiveness of such de-anonymization  techniques~\cite{narayanan2017obfuscation,  sarfraz2019privacy,  wu2020detecting}.



Users in Bitcoin store their coins in wallets, and exchange funds through transactions, with each transaction comprising its inputs (funds sources) and outputs (funds recipients). A privacy issue consists in exploring the regularity of certain types of transactions to infer personal information.

  If Alice, an employee of a coffee shop, is paid her salary in Bitcoin, one can infer Alice's salary by noticing the existence of weekly or monthly transactions with the coffee shop's wallet as input and Alice's wallet as output. 
To avoid such an issue, one strategy consists in Alice creating multiple wallets  to receive her salary.

 In the above example, the situation involved  recurring (separated in time) transactions to/from the same address.   
 The lack of privacy is best dealt with by avoiding address reuse (including avoiding reuse for change transactions). 
 Alternatively, a  single merged transactions (occuring at one point in time), may also involve multiple input addresses.
 %
 %
 %
 %
 %
 %
 Returning to Alice's example, it may be that the company pays all of its employees at the same time through a single merged transaction. Then if one knows Alice's wallet one can readily identify the wallets of the other employees of the company and how much they are paid using clustering techniques~\cite{meiklejohn2013fistful}.
In addition, if one knows that one of the addresses feeding the merged transaction is owned by the coffee shop, the coffee shop ownership of the other addresses can be inferred. 
 In that case, one may want to avoid the merge of transactions into a single sink. Such a strategy is referred to as \emph{merge avoidance}~\cite{mike_hearn}.

\begin{table}[t]
    \centering
    \scalebox{0.85}{
    \begin{tabular}{l|l|l}
         & merge avoidance & mixing
         service  \\
         \hline
         \hline
        goal & \multicolumn{2}{c}{ obfuscate the network} \\
        \hline
        strategy & \multicolumn{2}{c}{edge insertion, i.e., add transactions} \\
        \hline
        relies on & sender and receiver &  network of users (relays) \\
        \hline
        our  & optimal solution is  &   incentive compatibility   \\
    contribution   &  NP hard & jointly for sender and   \\
    & &  relays is  impossible \\
        \hline
    \end{tabular} } \vspace{-0.1in}
    \caption{Privacy schemes and our contributions} \vspace{-0.2in}
    \label{tab:table1}
\end{table}

An alternative strategy adopted by users to increase their privacy consists of relying on \emph{mixing services}.  Mixing services comprise users that produce spurious transactions that receive funds from given inputs, split those funds across multiple wallets, possibly multiple times, before  finally transferring those funds to the intended targets.  The purpose of mixing services is to obscure the trail of transactions back to the fund's original source, 
mixing potentially identifiable or ``tainted'' cryptocurrency funds with others.

It follows from the above discussion  that there are two main types of linkage in a financial system we want to
protect against: 
(1) linking \textbf{different} coins/transactions from the \textbf{same user} together, e.g., to avoid the possibility of determining that two coins are owned by the same user,
(2) linking the \textbf{same} coin over (possibly) \textbf{different} users together, e.g., to avoid the possibility of determining that a given coin was transferred between two given  users.
Primary techniques for solving (1) and (2) are merge avoidance and  mixing, respectively. This paper shows that both are
difficult to do robustly, through our   main results: we show that while optimal merge avoidance leads to an NP-hard optimization problem, mixing services yield   impossibility results pertaining to their incentive mechanisms (Table~\ref{tab:table1}). 



\textbf{Prior art. } There has been a recent surge in interest on the privacy aspects of cryptocurrencies~\cite{tang2020privacy, tikhomirov2020quantitative}.
Merge avoidance was first proposed as an alternative to services such as CoinJoin~\cite{maurer2017anonymous}.\footnote{\url{https://medium.com/p/7f95a386692f/}}
Among the tools that leverage merges to break anonymity, Bitiodine is a notable example~\cite{spagnuolo2014bitiodine}.
Mixing services have their roots in networks such as the Tor project~\cite{tor}, which aims to anonymize users in the Internet.  The literature on mixing services has been rapidly growing~\cite{moser2013inquiry, ziegeldorf2015coinparty}, with measurements and results that are complementary to those reported in this work. 
   
\textbf{Contributions. } Our contributions are twofold. 

\textit{\textbf{Optimal merge avoidance is NP-hard: } } we show that optimally allocating resources for merge avoidance is NP-hard.  The result involves a reduction from the partition problem, and motivates  heuristics for merge avoidance. 



\textit{\textbf{An impossibility result on incentive-compatible mixing services: }} we prove  that no incentive-compatible strategy for resource allocation in mixing services can at the same time account for strategic relays and strategic targets.   Resource allocation for incentive-compatible mixing services must  relax   assumptions, e.g., pertaining to constant amount of currency  in the system, to be feasible.



%



\section{Merge Avoidance}
\label{sec:hard}


Merge avoidance in blockchain systems was first proposed by Mike Hearn~\cite{mike_hearn}, one of the Bitcoin core developers, by the end of 2013.  It basically consists of splitting transactions among multiple addresses, so as to avoid the easy identification that two addresses are owned by the same user.  Our goal is to measure the impact and effectiveness of merge avoidance, noting that the split of transactions involves costs.


We  model the Bitcoin transaction network as a graph comprising nodes of two types:

 \textbf{Value node:} 
    each transaction has a list of input value nodes and generates a list of output value nodes.
     Each input and  output must correspond to an address that holds some amount of currency (e.g., Bitcoins or BTC).

  \textbf{Transaction node:} this node characterizes the action of transferring value from input to output nodes. Each transaction corresponds to a transaction identifier in the blockchain.

Directed edges are created between a transaction node and each value node involved in this transaction, with directions from input value nodes to the transaction node and from the transaction node to the output nodes; see Figure~\ref{alg:original_graph}(a).

The BTC graph  spawns from its genesis block.  Every new block generates at least one new value node, corresponding to the coinbase transaction, with a given reward. Value nodes are set as input to  transaction nodes, which output one or more new value nodes. It is well  known that roughly 80\% of the transactions in the BTC network  output 2 value nodes, one corresponding to the destination and the other corresponding to a change to the  transaction issuer.   

One of the most natural strategies to cluster  addresses is to merge the input addresses for a transaction and assume that they all belong to the same owner. That is because it is likely that the owner of some set of addresses will use the accumulated values together to pay for some goods or service, and output the remainder to one change address. Now, suppose that one  learns the owner of one of those addresses, e.g., through side channels, like online forums or social networking, by social engineering, or by leveraging the fact that the address belongs to  a famous retailer, who issues transactions at known values. Then,  one can easily learn the owner of additional addresses belonging to that user,  following the activity of his addresses and, subsequently, also de-anonymize other elements in the network.

The goal of merge avoidance~\cite{mike_hearn}  is to mitigate this weakness by avoiding the use of multiple input addresses into a transaction node, effectively modifying the transaction graph. To that aim, if one intends to merge  $N$ addresses to pay a single destination address, one alternative  is to create $N$ separate transactions and output them to novel destination addresses belonging to the same user. The main cost involved  in this scheme relates to the  extra fees incurred in the issuing of such transactions. 


\subsection*{Hardness of Optimal Merge Avoidance}


We begin by considering a single transaction, illustrated in Figure~\ref{alg:original_graph}. The circles in Figure~\ref{alg:original_graph} represent value nodes and the boxes represent transaction nodes, and the edge values are the amount of BTC transferred. Figure~\ref{alg:original_graph}(a) shows the original graph, before transaction splitting for merge avoidance, and Figure~\ref{alg:original_graph}(b) shows one possible outcome of merge avoidance --- the owner of the addresses $D1$ and $D2$ is the same user.
To simplify presentation, we are  not considering any fees in this simplified view of the problem. Then, the sum of the values from the input  equals the sum of values in the output  and we work under such assumption. 
%
%

A user builds a transaction, and before submitting it to the blockchain the transaction goes through  merge avoidance.  The output of merge avoidance is a new set of transactions, whose outputs  fulfill  the originally requested output values. The user then submits those transactions to the miners, to  add them into the blockchain.




\begin{definition}[Multi-target merge avoidance] 
An original  transaction has $\ell$ input and $r$ output value nodes, with integer values $s_i$ and $t_j$, $1 \leq i \leq \ell$ and $1 \leq j \leq r$, respectively. 
A set of modified  values comprises integer elements ${m}_{i,j}$ such that
\begin{align}
    s_i &= \sum_{j=1}^r m_{i,j}, \quad  t_j = \sum_{i=1}^\ell m_{i,j}   \label{eq:constr}
\end{align}
While solving the merge avoidance problem, the number of transactions in the modified graph corresponds to the number of strictly positive values in the set of modified values.   The aim is to minimize such number, subject to~\eqref{eq:constr}.
\end{definition}
Note that $m_{i,j}$ is the amount of coins routed from $s_i$ to $t_j$ under merge avoidance.    Note also that we make no assumptions about the semantics of the output values, e.g., regarding what is an effective transfer and what is a change. 

The complexity of finding the minimum set of merge avoidance transactions  is established by the following result.


\begin{theorem}
Multi-target   merge avoidance is NP-Hard.
\end{theorem}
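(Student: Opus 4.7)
The plan is to reduce from the classical \textsc{Partition} problem: given positive integers $a_1, \ldots, a_n$ with $\sum_j a_j = 2S$, decide whether some $I \subseteq \{1,\ldots,n\}$ satisfies $\sum_{j \in I} a_j = S$. Given such an instance, I would construct a merge avoidance instance with $\ell = 2$ inputs of value $s_1 = s_2 = S$ and $r = n$ outputs of value $t_j = a_j$, and ask whether the minimum number of strictly positive $m_{i,j}$ satisfying~\eqref{eq:constr} is at most $n$. This construction is clearly computable in time polynomial in the bit-length of the input.

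The lower bound step is immediate: since each output value $t_j > 0$ is expressed in~\eqref{eq:constr} as a sum of non-negative integers $m_{i,j}$, every column must contain at least one strictly positive entry, so any feasible solution has at least $n = r$ positive $m_{i,j}$. Next I would prove the two directions of the equivalence. If \textsc{Partition} admits a solution $I$, then setting $m_{1,j} = a_j$ for $j \in I$, $m_{2,j} = a_j$ for $j \notin I$, and all remaining $m_{i,j} = 0$ satisfies~\eqref{eq:constr} with exactly $n$ positive entries. Conversely, if some feasible $(m_{i,j})$ has exactly $n$ positive entries, then by the column lower bound each of the $n$ columns contains exactly one positive entry; setting $I = \{j : m_{1,j} > 0\}$ gives $\sum_{j \in I} a_j = \sum_{j=1}^{r} m_{1,j} = s_1 = S$, producing a \textsc{Partition} certificate. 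Combined, this shows the decision version ``is the optimum at most $n$?'' is NP-hard, hence the optimization problem in the theorem is NP-hard.

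The main obstacle I expect is not combinatorial but expository: one must carefully separate the trivial structural lower bound (each column needs a positive entry) from the slightly subtler fact that tightness of this bound forces a single-support-per-column structure, which in turn is in bijection with bipartitions of the $a_j$'s into two equal-sum classes. Because $\ell = 2$, no transportation-polytope machinery is required; the bipartite structure and the integrality of the $m_{i,j}$ make the argument essentially elementary. A minor but worth-stating point is that the reduction preserves integrality (a requirement of the merge avoidance definition) and that \textsc{Partition} is already NP-complete in the strong sense needed, so the magnitudes of the $a_j$'s do not affect the reduction's validity.
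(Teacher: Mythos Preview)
Your reduction is correct and matches the paper's approach: the paper also reduces from \textsc{Partition}, but with the roles of inputs and outputs transposed ($N$ input nodes and $2$ output nodes of value $S$ each, rather than your $2$ inputs and $n$ outputs), which is equivalent since constraints~\eqref{eq:constr} are symmetric in rows and columns. One small expository caution: your phrase ``NP-complete in the strong sense needed'' could be misread as claiming \textsc{Partition} is strongly NP-complete, which it is not (indeed the paper remarks on its pseudo-polynomial solvability); the reduction only needs ordinary NP-completeness, which it has.
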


Let $\ell$ and $r$ be the number of value nodes in the left and right side of the transaction.
A lower bound on the number of transactions  in the modified graph is $\ell$. An upper bound is given by $\ell \cdot r$, which occurs when we need to transfer value from each of the   nodes in the left to each of the   nodes in right. In that case, the original graph is transformed into a complete bipartite graph with  $\ell \cdot r$ edges, which in turn   yields a modified graph with  $\ell \cdot r$ transactions.


\begin{figure}[t]
\centering 
\includegraphics[width=0.3\textwidth]{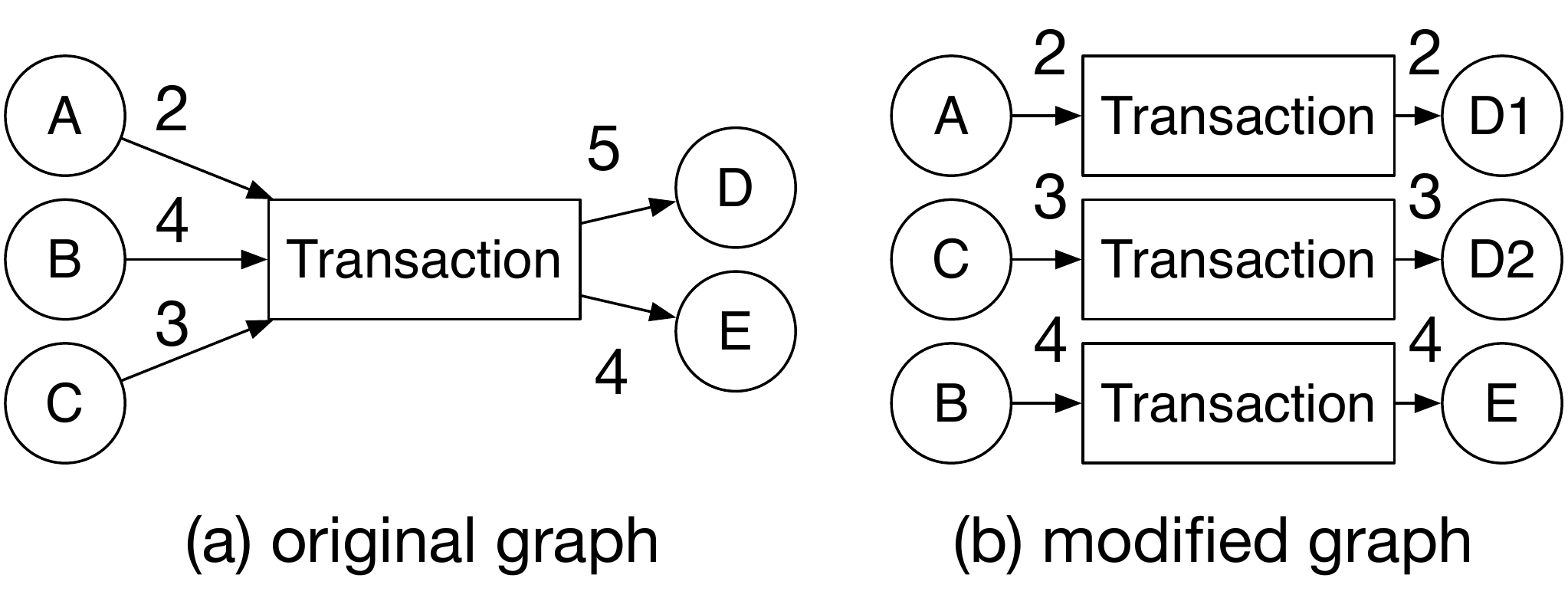}   \vspace{-0.15in}
\caption{Illustrative transaction graphs.}  \vspace{-0.1in}
\label{alg:original_graph}
\end{figure}

\begin{proof}[of Theorem 2.1]
The proof follows from a reduction from the partition problem, which is NP-complete, to a special instance of merge avoidance wherein we have $N$ input nodes and 2 output nodes, whose values   equal   half the sum of the values of the input nodes.
\end{proof}

There is a pseudo-polynomial time dynamic programming solution to the partition problem, and  heuristics to solve the problem in many special instances. Nonetheless, new heuristics may still be required to account for the multiple layers of transactions in the BTC graph.

Note that the above discussion accounted for a single transaction. We leave the analysis of heuristics for the  multi-transaction setup as subject for future work, noting that they may benefit from  approximate solutions to the single transaction instance.

\textbf{Change transactions. } In the above definition there is no special  treatment of change transactions.  Nonetheless, 
 typical transactions in Bitcoin have a single output target value and a change.
As those transactions are prevalent, we  consider the  following   formulation of the problem.
\begin{definition}[Single-target merge avoidance]
Given a set of $\ell$  value nodes with positive integer values $s_i$ and one target output value $v$,   we want to determine the set $K$, contained in the power set of  ${\{1,\ldots, \ell\}}$,  with smallest cardinality, that meets the  output,
\begin{align}
\min_{K \in P({\{1,\ldots, \ell\}})} |K| \textrm{ such that }     \sum_{i \in K} s_i \geq v.
\end{align}
The smallest set corresponds to the optimal way of creating a merge avoidance set of transactions,  minimizing the number of transactions while fulfilling the requested output value.
\end{definition}

Under the above problem formulation,   merge avoidance can be solved in polynomial time using a  greedy algorithm. The algorithm first orders the input nodes of a transaction,  in decreasing order based on their values. Then, it selects the first nodes whose sum of values is sufficient to resolve the considered transaction. 




\section{Mixing Services}

\label{sec:mixing}



Another strategy to increase privacy in a blockchain system relies on the deployment of mixing services.
These services attempt to conceal the identity of nodes in blockchain records by ``routing'' these records through a sequence of proxy nodes, successively replacing the identity of each node with that of its successor.
This can be thought of as a ``shuffling'' of node identities, which does not remove the identity of the original node but rather hides it in a steganographic fashion.
While this idea is not novel in itself, tracing back its origins to Web anonymity systems such as Tor~\cite{tor},
it has become particularly appealing for cryptocurrency systems, as their records regard objects (quantities of some cryptocurrency) with direct monetary value.
Mixing services usually operate under a rewarding scheme in which the intermediate nodes involved in the concealment process charge the node interested in having its own identity concealed; these transactions are usually managed externally through e.g. a system of credits, though cryptocurrency systems can establish that these payments are performed within the system itself, which simplifies their implementation.

In designing such mixing services, it is important not to assume that every node in the blockchain system is trustworthy, and therefore to design the service such that it is robust to attacks from its own nodes.
In particular, we consider here \emph{edge insertion attacks}. In such attacks, nodes falsely claim that additional nodes were involved in the process in order to receive the corresponding rewards on their behalf.
This kind of attack is rather simple to prevent through the deployment of a centralized authority responsible for validating the service performed and verifying node identities, but becomes a challenge if both these tasks are designed to be done in a distributed fashion.

Next, we will work with the following definition for an edge insertion attack and associated terminology.




\begin{definition}
In an identity concealment process, an \emph{applicant node} is a node which requests that its identity in a blockchain record be concealed,
and a \emph{concealer node} is any of the nodes involved in concealing the identity of the applicant node.
A \emph{concealment route} is a sequence of nodes $\mathcal{R} := (r_0, r_1, \dots, r_{i-1}, r_{i}, \dots, r_n)$ with $r_0$ being the applicant node and all others being concealer nodes.
The \emph{length} of this route is given by the number $n$ of concealer nodes.
The \emph{reward} received by a concealer node $r_i$ is denoted by $R_i$, and the total \emph{cost} paid by the applicant node is denoted by $C$.
\end{definition}

\begin{definition}
An \emph{edge insertion attack}, performed by an attacking node $r_i$ $(0 \leq i \leq n)$, occurs when $r_i$ forges a set of \emph{Sybil nodes} $\mathcal{S}$, replaces $\mathcal{R}$ by a bogus concealment route $\mathcal{R}' = (r_0, \dots, r_{i-1}, s_1, \dots, s_j, r_{i}, \dots, r_n) := (r'_0, \dots, r'_{n+j})$, with $j \in\mathbb{N}$ and $s_1,\dots,s_j \in \mathcal{S}$, and acquires rewards directed to itself and all nodes in $\mathcal{S}$.
\end{definition}

This definition of concealment route only covers mixing by routing funds through a long sequence of nodes. Mixing services usually apply additional strategies involving e.g. splitting and re-merging funds; nevertheless, any impossibility results that apply to concealment routes extend to general mixing services, whose tools include concealment routes.

\subsection*{Impossibility Result}

In designing a distributed system which prevents edge insertion attacks, one can  leverage the freedom in determining the values of the rewards to be paid by the applicant node. One possibility is to diminish the value of the individual reward for longer concealment routes. This way, even if malicious nodes are able to acquire more reward quotas, their smaller value would not be enough to make up for the reward they were originally entitled to.

However, it is important to note that not only concealing nodes can attempt to perform edge insertion attacks. Rather, the applicant node can also engage in such attempts. In this case, however, the attacker is not attempting to hoard more reward quotas than he was entitled to, but to obtain a certain amount of rewards in order to inadequately recover a portion of the original cost.

The following theorem states that, under mild assumptions on the reward scheme, it is impossible to simultaneously prevent edge insertion attacks from concealment and applicant nodes. 


\begin{theorem}\label{nao-previne}
Consider a reward scheme for an identity concealment process. Assume it follows the guidelines below:
\begin{description}
\item[Zero-sum] The applicant node is charged for exactly the total reward paid to concealing nodes, i.e., the total amount of credits in the system is constant over time;\label{zero-sum}
\item[Length-dependency] The amount of credits the applicant node is charged for is exclusively a function of the length of the concealment route;
\item[Uniformity] Rewards paid to concealing nodes are uniform, i.e., all concealing nodes receive equal rewards.
\end{description}

Then, this scheme cannot prevent, simultaneously, edge insertion attacks performed by the applicant node and by concealment nodes.
\end{theorem}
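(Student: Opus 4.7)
The plan is to parameterize the reward scheme by the route length and then to write down, in closed form, the payoff to an attacker in each of the two scenarios; the impossibility then reduces to noting that the two resulting monotonicity requirements on the per-concealer reward are incompatible.

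By Length-dependency, the total cost charged to the applicant depends only on the length $n$ of the concealment route; call it $C(n)$. By Uniformity, all $n$ concealing nodes receive a common reward, call it $R(n)$. By Zero-sum, $C(n) = n \cdot R(n)$. Thus the entire scheme is captured by a single non-negative function $R:\mathbb{N}\to\mathbb{R}_{\ge 0}$.

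Next I would analyze each attack. For a concealer attack by $r_i$ inserting $j\ge 1$ Sybils into a real route of length $n$, the modified route has length $n+j$, and the attacker collects rewards for itself and all its Sybils, yielding a total of $(j+1)\,R(n+j)$ instead of the honest payoff $R(n)$. Preventing this attack therefore demands
\begin{equation}
(j+1)\,R(n+j) \;\le\; R(n) \qquad \text{for all } n\ge 1,\ j\ge 1. \label{eq:conc}
\end{equation}
For an applicant attack, the applicant inflates a real route of length $n$ to apparent length $n+j$ by inserting $j$ Sybils under its own control. It now pays $C(n+j) = (n+j)R(n+j)$ but recovers $j\,R(n+j)$ through its Sybils, for a net cost of $n\,R(n+j)$, to be compared against the honest net cost $n\,R(n)$. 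Preventing this attack demands
\begin{equation}
R(n+j) \;\ge\; R(n) \qquad \text{for all } n\ge 1,\ j\ge 1. \label{eq:appl}
\end{equation}

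Finally I would combine \eqref{eq:conc} and \eqref{eq:appl}: substituting the lower bound \eqref{eq:appl} into \eqref{eq:conc} gives $R(n) \ge (j+1)\,R(n+j) \ge (j+1)\,R(n)$ for every $j\ge 1$, which forces $R(n)\le 0$ and hence, since rewards are non-negative, $R(n)=0$. A scheme with identically zero rewards is degenerate: no concealer has incentive to participate, so the identity-concealment process cannot be sustained. In the non-degenerate case, at least one of \eqref{eq:conc} or \eqref{eq:appl} must fail, i.e.\ at least one of the two attack types remains profitable, which is exactly the claim of the theorem.

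The only genuinely subtle step is the bookkeeping in the applicant attack, namely arguing that the applicant nets the Sybil rewards it pays out (so that its effective cost is $n\,R(n+j)$ rather than $(n+j)\,R(n+j)$); the rest is arithmetic. One might worry about variants such as a strict inequality notion of ``prevention,'' but the argument is robust: strict versions of \eqref{eq:conc} and \eqref{eq:appl} yield the same contradiction $R(n) > (j+1)R(n)$ for $j\ge 1$.
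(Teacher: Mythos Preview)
Your proposal is correct and follows essentially the same route as the paper: parameterize everything by the per-concealer reward $R(\cdot)$, compute the attacker's payoff in each of the two scenarios, and observe that the resulting inequalities $(j+1)R(n+j)\le R(n)$ and $R(n+j)\ge R(n)$ are incompatible. The only difference is that you explicitly chase down the degenerate boundary case $R\equiv 0$ (and the strict-inequality variant), whereas the paper simply declares the combined inequality ``absurd'' without isolating that edge case.
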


\begin{proof}
Consider an identity concealment process with a concealment route $\mathcal{R}$ of length $l$.
Denote by $C(l)$ the total cost charged from the applicant node, and by $R(l)$ the reward received by a concealer node.
Since rewards are uniform, $R(l) = C(l)/l$ is equal for all concealer nodes. 

Now, denote by $CR(l,k)$ the net reward received by an attacking concealer node, when adding $k \geq 0$ Sybil nodes to $\mathcal{R}$.
If $k = 0$, there is no attack and $CR(l,0) = R(l)$.
More generally, when an attack is performed, each concealing node receives a reward equal to $R(l+k)$ (since the bogus route has length $l+k$), but the attacker collects both its own reward and the rewards of every Sybil node, for a total reward of $CR(l,k) = (k+1)\cdot R(l+k).$

Analogously, denote by $AC(l,k)$ the net cost obtained by an attacking applicant node, when adding $k \geq 0$ Sybil nodes to $\mathcal{R}$.
If $k = 0$, there is no attack and $AC(l,0) = C(l)$.
More generally, when an attack is performed, the applicant node is charged a total cost of $C(l+k)$ due to the bogus route, but cashes back the share of this cost corresponding to the Sybil nodes, thus effectively paying a total of $AC(l,k) = C(l+k) - k\cdot R(l+k) = l\cdot R(l+k).$

To prevent both kinds of attack simultaneously, the reward scheme must assure that the most beneficial scenario for an attacker occurs when there is no attack, i.e. $k=0$.
Under the scheme assumptions, this amounts to ensuring that the concealer reward is maximized when $k=0$:
$$CR(l,k) \leq CR(l,0)~\forall~ l,k\in\mathbb{N},$$
and the applicant cost is minimized when $k=0$:
$$AC(l,k) \geq AC(l,0)~\forall~ l,k\in\mathbb{N}.$$
Note that, for any $k > 0$, the first condition implies that $R(l+k) \leq R(l)/k+1$, while the second implies that $R(l+k) \geq l\cdot R(l)/l = R(l)$. Both inequalities can only be simultaneously satisfied if $R(l)/(k+1) \geq R(l)$, which is absurd.
\end{proof}

Note that this impossibility results holds for a reward scheme satisfying rather loose conditions.
From the perspective of this analysis, attempts at effectively preventing edge insertion attacks fall in two  categories.


\textit{\textbf{New design relaxing constraints.}}  One option is to design  a reward scheme that relaxes one or more constraints required by Theorem \ref{nao-previne}. In using such approach, one must be mindful of the consequences of the chosen design. For instance, the zero-sum condition can be relaxed by allowing credits to be added or removed from the system depending on the length of the concealment route, but the management of the total credit in the system would likely require some kind of centralized control to prevent the economic collapse of the system, which is certainly not desired in blockchain systems. The length-dependency condition can be relaxed by allowing the reward function to vary over time, but in addition to the possible necessity of exerting this variation in a centralized fashion, the effective result of this strategy would only be to prevent edge insertion attack of only one kind or the other at any given  time instant.

\textit{\textbf{New design with additional security measures.}}  One alternative is to combine the reward scheme with additional security measures. 
In that case, the mechanism may prevent edge insertion attacks by the concealer nodes via a reward scheme,  requiring the last concealer node to validate the concealment route in order to prevent edge insertion attacks by the applicant node.

  {
\emph{\textbf{Discussion. }} 
Although the mechanics of how  fees are charged and paid to mixer nodes are out of the scope of this work, a fundamental assumption in our model is that both kinds of sybil attacks are feasible.  
Route insertion attacks, i.e., adding nodes to $\mathcal{R}$, may be difficult in certain practical settings if the token owner needs to choose a route a priori and specify how much it is willing to pay in fees to the mixer. In   onion routing, for instance, one chooses a route ahead of time, using source routing, and any deviations would result in a router not being able to decrypt the message~\cite{fanti2018dandelion++, tor}. }     

\begin{small}

\bibliographystyle{acm}
\bibliography{main}

\begin{thebibliography}{10}

\bibitem{monero}
{\sc CoinTelegraph}.
\newblock {IRS} offers {USD} 625,000 bounty to break {M}onero, 2020.
\newblock \url{https://tinyurl.com/cointelemonero}.

\bibitem{tor}
{\sc Dingledine, R., Mathewson, N., and Syverson, P.}
\newblock Tor: The second-generation onion router.
\newblock In {\em USENIX Security\/} (USA, 2004), SSYM'04, USENIX Association,
  p.~21.

\bibitem{ermilov2017automatic}
{\sc Ermilov, D., Panov, M., and Yanovich, Y.}
\newblock Automatic bitcoin address clustering.
\newblock In {\em ICMLA\/} (2017), pp.~461--466.

\bibitem{fanti2018dandelion++}
{\sc Fanti, G., Venkatakrishnan, S.~B., et~al.}
\newblock Dandelion++ lightweight cryptocurrency networking with formal
  anonymity guarantees.
\newblock {\em POMACS 2}, 2 (2018), 1--35.

\bibitem{mike_hearn}
{\sc Hearn, M.}
\newblock Merge avoidance, 2013.
\newblock
  \url{https://medium.com/@octskyward/merge-avoidance-7f95a386692f#.aal20545u}.

\bibitem{maurer2017anonymous}
{\sc Maurer, F.~K., Neudecker, T., and Florian, M.}
\newblock Anonymous coinjoin transactions with arbitrary values.
\newblock In {\em Trustcom\/} (2017), IEEE, pp.~522--529.

\bibitem{meiklejohn2013fistful}
{\sc Meiklejohn, S., et~al.}
\newblock A fistful of bitcoins: characterizing payments among men with no
  names.
\newblock In {\em IMC\/} (2013).

\bibitem{moser2013inquiry}
{\sc M{\"o}ser, M., B{\"o}hme, R., and Breuker, D.}
\newblock An inquiry into money laundering tools in the bitcoin ecosystem.
\newblock In {\em 2013 APWG eCrime researchers summit\/} (2013), IEEE.

\bibitem{narayanan2017obfuscation}
{\sc Narayanan, A., and M{\"o}ser, M.}
\newblock Obfuscation in bitcoin: Techniques and politics.
\newblock {\em arXiv:1706.05432\/} (2017).

\bibitem{sarfraz2019privacy}
{\sc Sarfraz, U., Alam, M., et~al.}
\newblock Privacy aware {IOTA} ledger: Decentralized mixing and unlinkable iota
  transactions.
\newblock {\em Computer Networks 148\/} (2019), 361--372.

\bibitem{spagnuolo2014bitiodine}
{\sc Spagnuolo, M., Maggi, F., and Zanero, S.}
\newblock Bitiodine: Extracting intelligence from the bitcoin network.
\newblock In {\em Financial Cryptography\/} (2014), Springer, pp.~457--468.

\bibitem{tang2020privacy}
{\sc Tang, W., Wang, W., Fanti, G., and Oh, S.}
\newblock Privacy-utility tradeoffs in routing cryptocurrency over payment
  channel networks.
\newblock In {\em SIGMETRICS\/} (2020).

\bibitem{tikhomirov2020quantitative}
{\sc Tikhomirov, S., Moreno-Sanchez, P., et~al.}
\newblock A quantitative analysis of security, anonymity and scalability for
  the lightning network.
\newblock {\em IACR Cryptology\/} (2020).

\bibitem{wu2020detecting}
{\sc Wu, J., Liu, J., Chen, W., et~al.}
\newblock Detecting mixing services via mining bitcoin transaction network with
  hybrid motifs.
\newblock {\em arXiv preprint arXiv:2001.05233\/} (2020).

\bibitem{ziegeldorf2015coinparty}
{\sc Ziegeldorf, J.~H., et~al.}
\newblock Coinparty: Secure multi-party mixing of bitcoins.
\newblock In {\em ACM Security and Privacy\/} (2015).

\end{thebibliography}
\end{small}

\end{document}